\def\vec#1{\mathbf{#1}}
\def\ket#1{| #1 \rangle}
\def\bra#1{\langle #1 |}
\def\ip#1#2{\langle {#1} | {#2} \rangle}
\def\norm#1{\lVert #1 \rVert}
\def\dim{\operatorname{dim}}
\def\Tr{\operatorname{Tr}}
\def\dd{\operatorname{d}\!}
\def\UU{\mathbf{U}}
\def\SU{\mathbf{SU}}
\def\PU{\mathbf{PU}}
\def\sp{\mathfrak{sp}}
\def\su{\mathfrak{su}}
\def\uu{\mathfrak{u}}
\def\F{\mathfrak{F}}
\def\G{\mathfrak{G}}
\def\L{\mathfrak{L}}
\def\SS{\mathcal{S}}
\def\H{{\cal H}}
\def\N{{\cal N}}
\def\RR{\mathbb{R}}
\def\CC{\mathbb{C}}
\def\eps{\varepsilon}
\def\Im{\, \mathfrak{Im}}
\def\Re{\, \mathfrak{Re}}
\begin{document}

\title
      {A Closer Look at Quantum Control Landscapes \&
       their Implication for Control Optimization}

\author{Pierre de Fouquieres and Sophie G.~Schirmer} 

\address{College of Science, Physics, Swansea University, 
         Swansea, SA2 8PP, United Kingdom \\
         \email{sgs29@swan.ac.uk}}
\date{\today}

\maketitle

\begin{abstract}
The control landscape for various canonical quantum control problems is
considered.  For the class of pure-state transfer problems, analysis of
the fidelity as a functional over the unitary group reveals no
suboptimal attractive critical points (traps).  For the actual
optimization problem over controls in $L^2(0,T)$, however, there are
critical points for which the fidelity can assume any value in $(0,1)$,
critical points for which the second order analysis is inconclusive, and
traps.  For the class of unitary operator optimization problems analysis
of the fidelity over the unitary group shows that while there are no
traps over $\UU(N)$, traps already emerge when the domain is restricted
to the special unitary group.  The traps on the group can be eliminated
by modifying the performance index, corresponding to optimization over
the projective unitary group.  However, again, the set of critical
points for the actual optimization problem for controls in $L^2(0,T)$ is
larger and includes traps, some of which remain traps even when the
target time is allowed to vary.
\end{abstract}

\keywords{quantum control, control landscapes}


\ccode{Add AMS classification here.}

\section{Introduction}

Quantum theory has been in existence for about a century but until
recently, the main emphasis in the field was on constructing Hamiltonian
models and solving the Schrodinger equation.  Although it was recognized
that external fields and potentials could change the Hamiltonian and
thus the dynamics of a system, and such external fields were certainly
used in many areas from nuclear magnetic resonance to atomic physics to
effect changes to the system, it was only rather recently that the full
potential of using such external fields was recognized.  Since then the
subject of control of quantum systems has developed from a niche area
into a subject of rapidly growing interest and importance, with an ever
increasing number of applications ranging from quantum chemistry to
quantum information processing (see e.g.\ Ref.~\refcite{rabitz2009}).
Quantum control has been applied, for instance, to influence the outcome
of chemical reactions\cite{somloi1993}, to prepare entangled
states\cite{xwang2009}, which are a resource in quantum metrology and
information processing, and to realize quantum
gates\cite{nigmatullin2009}, which are fundamental building blocks for a
quantum computer.  The scope of the applications is vast.

As the potential and importance of quantum control was realized,
attempts were made to formulate quantum control problems abstractly, and
establish solid mathematical foundations.  Some pioneering attempts were
necessarily imperfect and the theory is still under development, but
there have been a number of interesting theoretical results.  One
important area is the control landscape\cite{chakrabarti2007}.  While
it is not too difficult in many cases to formulate a particular task in
terms of a control optimization problem, i.e., of finding an admissible
control that maximizes a performance index, solving the resulting
control optimization problems is challenging, especially for large or
complicated systems, and in almost all cases they can only be solved
computationally using numerical optimization strategies.  Since the
evaluation of the performance index for quantum control problems
generally requires the solution of an operator or partial differential
equation, evaluation of the latter becomes computationally demanding and
efficient algorithms are very important.  This poses a challenge as we
are generally interested in finding global optima of a performance
index.  While it is possible to employ direct global search strategies
such as evolutionary algorithms, these are computationally expensive,
and success is generally not guaranteed.  On the other hand, there are
efficient algorithms for finding local optima of reasonably well-behaved
functions, especially when gradients can be computed relatively cheaply
compared to the computational overhead involved in evaluating the
performance index, as can be achieved in our case, including the Krotov
method\cite{somloi_controlled_1993,maday_new_2003,palao_quantum_2002},
concurrent-update gradient-ascent pulse engineering
(GRAPE)\cite{khaneja_optimal_2005} and various generalizations and
variants including quasi-Newton methods\cite{compare2010}.  This is
where the control landscape becomes relevant.

If the landscape is littered with many suboptimal, locally attractive
critical points then optimization strategies designed to find local
extrema have a very low probability of finding a global optimum from a
generic starting point.  In such a situation global search strategies
are usually required.  In the opposite extreme, if all local extrema are
actually global optima, then any algorithm designed to find local optima
will succeed in finding a global optimum from any generic starting
point.  In the intermediate case, if there are some sub-optimal (local)
extrema but their domain of attraction is small, then it is generally
still avantageous to employ efficient optimization strategies to find
local extrema.  The optimization may get trapped, i.e., attracted to
sub-optimal local extremum, but often simply restarting the algorithm
with a different initial condition will lead to success.  The success of
this strategy will depend on the probability of trapping, and usually
the ability to recognize and terminate trapped runs quickly.  The
landscape for quantum control problems was initially expected to be
complicated, but it was quickly realized that optimization algorithms
designed to find local extrema were often very successful at finding
controls for which the performance index to be optimized took values
very close to its \emph{global} optimum.  This has led a number of
researchers to study the issue and put forward arguments why this should
be the case, the core argument being that contrary to expectations, the
landscape for quantum control is extremely simple and devoid of
\emph{any} traps, i.e., suboptimal locally attractive critical points.

In this paper we revisit this topic, taking a closer look at the control
landscape.  Previous studies \cite{rabitz2004a,rabitz2004b,rabitz2005,
rabitz2006a,rabitz2006b,ho2006,hsieh2008a,hsieh2008b,wu2008,ho2009} have
mostly focused on regular critical points, in particular showing that
these can \emph{not} be traps for typical quantum control problems.
However, it is well-known in the broader field of optimal control that
these are not the only critical points\cite{montgomery1994}, and that
non-regular critical points must be considered\cite{bonnard1997,
chitour2006,chitour2008}.  For quantum control problems the existence of
non-regular critical points has only recently been
acknowledged\cite{wu2009} but their properties have not been studied
carefully.  For the Landau-Zehner it has recently been shown that there
are indeed no traps\cite{landau}, but in general it is not known if such
critical points can be traps.  The existence of such traps would
complicate the control landscape and invalidate some general results
arrived at by considering only regular points.  The main objective of
this paper is to show that non-regular critical points do exist for a
large class of quantum control systems and that at least some of these
are \emph{suboptimal extrema} for which the performance index can take
many values.  To this end we construct examples for which it can be
shown analytically that attractive local extrema exist.  This suggests
that further analysis is required to fully characterize the control
landscape, including the nature of potential traps and their practical
relevance, e.g., for the design of efficient algorithms.

The paper is organized as follows.  In Sec.~II, we formally define the
problem and give the mathematical prerequisites.  In Section III, we
take a closer look at the landscape for pure-state transfer problems.
In Sec.~IV we consider the landscape for unitary operator optimization
problems.  We conclude with a brief discussion in Sec.~VI.

\section{Background and Mathematical Prerequisites}

The focus of this paper is quantum systems subject to Hamiltonian
evolution governed by the Schrodinger equation.  Although the latter
generally applies to states evolving over an infinite-dimensional
Hilbert space, we restrict our attention to systems where the
\emph{Hilbert space} of interest $\H$ is \emph{finite-dimensional}.
However, our problems will be infinite-dimensional in that the
\emph{space of controls} is an \emph{infinite-dimensional} function
space.  This is the context for a considerable amount of work on control
of quantum systems, and can be partly motivated by the fact that in
practice one is usually either dealing with a system that intrinsically
has a finite number of degrees of freedom such as a collection of spins,
or for which the Hilbert space can be faithfully truncated and the
controlled dynamics restricted to finite-dimensional subspace of
interest.  Assuming this Hilbert space has dimension $N$, the evolution
of the system can be described by a unitary operator $U_f(t)\in\UU(N)$
satisfying the Schrodinger equation
\begin{equation}
\label{eq:SEU}
  i\hbar \dot{U}_f(t) = H_f(t) U_f(t), \quad U_f(0)=I,
\end{equation}
where $I$ is the identity matrix in $\UU(N)$ and $H_f(t)$ is the
Hamiltonian of the system, which in our case will be control-dependent,
as will be indicated by the subscript $f$.  For simplicity the control
dependence will be assumed to be linear in the (real-valued) controls
$f_m(t)$,
\begin{equation}
  H_f(t) = H_0 + \sum_{m=1}^M f_m(t) H_m.
\end{equation}
Here $H_0$ and $H_m$ are bounded Hermitian operators on $\H$, $H_0$
corresponding to the system's intrinsic Hamiltonian and $H_m$ modelling
the interaction with the control fields.  The control fields $f_m$ are
functions in a suitable function space.  In this paper we choose $f_m
\in \L^2(0,T)$, the space of square-integrable functions over the
interval $[0,T]\subset \RR$, although the results can be generalized to
other function spaces.  In the following we also choose $M=1$ unless
otherwise stated, choose units such that $\hbar=1$, and define the
two-point propagator $U_f(t_2,t_1)=U_f(t_2)U_f(t_1)^\dag$.

\subsection{Control problems}

Most typical control problems for (Hamiltonian) quantum systems fall in
one of the following three categories: pure-state transfer, observable
control and unitary operator implementation.  The pure-state transfer
problem consists of preparing the system in a desired state
$\ket{\Psi_g}$, usually at a given time $T$, assuming it is initialized
to some state $\ket{\Psi_0}$ at time $0$.  A natural measure of success
in achieving this goal is the transfer probability
\begin{equation} 
  \label{eq:FP}
  \F_P(f) = |\ip{\Psi_g}{\Psi(T)}|^2, 
\end{equation}
where in our case $\ket{\Psi(T)}=U_f(T)\ket{\Psi_0}$ and $U_f(T)$ is a
solution of~(\ref{eq:SEU}).  For a given Hamiltonian, initial and target
states, and fixed target time $T$, the transfer probability depends only
on the choice of control $f$, i.e., if the controls vary over $L^2(0,T)$
then $\F_P:L^2(0,T)\mapsto \RR$.  The observable control problem lies in
taking a system characterized by an initial density operator $\rho_0$ at
time $0$ to a state $\rho(T)$ at time $T$ for which the selected
observable $A$ attains its maximal possible value, i.e., we aim to
maximize the expectation value
\begin{equation}
  \label{eq:FD}
   \F_D(f) = \Tr[A \rho(T)],
\end{equation}
where $\rho(T)=U_f(T)\rho_0 U_f(T)^\dag$ and $U_f(T)$ is again a
solution of~(\ref{eq:SEU}).  The gate synthesis control problem consists
in choreographing the system dynamics to implement the unitary gate $V$
over the time interval $[0,T]$.  Using the Hilbert-Schmidt norm to
measure distance from the target gate $V$, this gives the gate fidelity
\begin{equation}
  \label{eq:FV}
  \F_V(f) = \frac{1}{N}\Re \Tr[V^{\dag} U_f(T)] 
\end{equation}
as a natural performance index, where $U_f(T)$ is a solution
of~(\ref{eq:SEU}) as before.

Broadly speaking, the objective of quantum control is to find a control
$f$ that maximizes one of these performance indices, where $f$ is
allowed to vary over a function space such as $L^2(0,T)$.  Here $T$ is
fixed unless stated otherwise.  In practice, quantum optimal control
problems can usually only be solved computationally using optimization
algorithms and such implementations require us to restrict the space of
controls to a finite-dimensional subspace, but in this paper we will be
chiefly interested in the ideal case where $f$ is allowed to vary over
$L^2(0,T)$.

More precisely, we wish to find the (global) maximum of a (non-linear)
functional $\F:L^2(0,T)\rightarrow\RR$, which in our case is either
$\F_P$, $\F_D$, or $\F_V$.  It will often be fruitful to think of $\F$
as a composition of the solution functional $U_f(T):L^2(0,T)\to \UU(N)$
taking a control field $f$ to the corresponding propagator at time $T$,
$U(T)$, via~(\ref{eq:SEU}), and the fidelity on the Lie group
$\G:\UU(N)\rightarrow\RR$, and we shall use
\begin{equation}
 \label{eq:fidG}
 \G_P(U)= |\bra{\Psi_g}U\ket{\Psi_0}|^2, \quad
 \G_D(U)= \Tr[AU\rho_0 U^\dag], \quad
 \G_V(U)= \frac{1}{N}\Re\Tr[V^\dag U],
\end{equation}
respectively, to denote the equivalents of $\F_P$, $\F_D$ and $\F_V$
above.

\subsection{Controllability considerations}

Whether we allow the fields $f_m$ to range over the whole of $L^2(0,T)$
or a smaller space such as the continuous or piecewise constant
functions, we must consider whether the aforementioned control problems
are feasible, i.e., whether the maximum of the performance index is
actually attainable.  In general, reachability of a target state that
globally optimizes the performance index is difficult to assess, but we
know that a sufficient condition for attainability of the global maximum
is controllability.  Although this is generally a much stronger
requirement, controllability is useful as it is well characterized.
Specifically, we say that a given system $H_f(t)$ is controllable if for
all endpoint conditions -- either $\{\ket{\Psi_0},\ket{\Psi_g}\}$,
$\{\rho_0,A\}$ or $\{I,V\}$, there exists a time $T$ and an (admissible)
control $\vec{f}$ that solves the corresponding control problem.  In
general, for bilinear control systems on a Lie group, as in our case,
this problem reduces to one about the Lie algebra $\L$ generated by the
matrices $\{iH_m\}_{m=0}^M$.  For pure-state control problems
controllability is equivalent to $\L$ being a representation of either
$\sp(N/2)$, $\sp(N/2)\oplus \uu(1)$, $\su(N)$ or $\uu(N)$.  For
mixed-state (density-operator) control problems, we require $\su(N)$ or
$\uu(N)$, while the system is fully unitary operator controllable only
when $\L=\uu(N)$\cite{albertini2001,schirmer2002}, although the
distinction between $\su(N)$ and $\uu(N)$ is artificial in most cases as
the extra dimension contributes only a global phase, which is generally
not observable.  Since we usually wish to fix the final time $T$ when
carrying out control optimization, we are also interested in the
stronger notion of exact-time controllability, namely, whether a system
has some critical time $T_c$ such that all endpoint conditions give
solvable control problems for any $T>T_c$.  It turns out that,
neglecting global phase, these notions of controllability are equivalent
to the previous three by Theorem 13 of \cite{jurdjevic1997}.

Apart from these now standard results, we will need the following:
\begin{theorem}
Let $H=H_0+f(t)H_1$ be a control system, $\L=\uu(N)$ or $\su(N)$ and
$G=\UU(N)$ or $\SU(N)$ be associated Lie group. 

If the set of all commutator expressions in $i H_0$ and $i H_1$ joined
with $i H_0$ and $i H_1$ span $\L$ (Lie algebra rank condition), then
there exists a maximal time $T_{\max}$ and neighborhood $\N$ of
$(H_0,H_1)$ in $i\L\times i\L$ such that for all systems $s\in\N$ and
elements $g\in G$, there is a control taking $s$ to $g$ in time
$T<T_{\max}$.
  
If the set of all commutator expressions in $i H_0$ and $i H_1$ joined
with $iH_1$ span $\L$ (exact-time Lie algebra rank
condition~\footnote{Note again that the exact-time and usual Lie algebra
conditions match for $\su(N)$.}), then there is a critical time $T_c$
and neighborhood $\N$ of $(H_0,H_1)\in i\L \times i\L$ such that for all
$s\in\N$, $g\in G$ and $T>T_c$, there is a control taking $s$ to $g$ in
time $T$.
\end{theorem}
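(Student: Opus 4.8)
The plan is to phrase everything in terms of the reachable sets of the right-invariant control system $\dot U_f=(iH_0+f\,iH_1)U_f$, $U_f(0)=I$, on the compact connected group $G$, and then to combine three now-standard facts with a compactness argument. Writing $\mathcal R^{s}_{\le T}$ (resp.\ $\mathcal R^{s}_{T}$) for the set of propagators attainable in time $\le T$ (resp.\ exactly $T$) by admissible controls for the system $s=(H_0,H_1)$, and $E^{s}_{\tau}\colon f\mapsto U_f(\tau)$ for the endpoint map, the facts I would invoke are: (a) on a compact connected Lie group a right-invariant system obeying the Lie algebra rank condition is controllable, i.e.\ $\bigcup_{T>0}\mathcal R^{s}_{\le T}=G$ (Jurdjevic--Sussmann; cf.\ \cite{jurdjevic1997}); (b) the rank condition implies that for every $T>0$ the set $\mathcal R^{s}_{\le T}$ has nonempty interior, equivalently that some control of duration $\le T$ is a regular point of $E^{s}_{\tau}$ (Krener); and (c) the analogue of (b) with $\mathcal R^{s}_{T}$ in place of $\mathcal R^{s}_{\le T}$ and a regular control of duration exactly $T$, whenever the exact-time rank condition holds. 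Two structural remarks will be used repeatedly: since the system is right invariant, $\mathcal R^{s}_{\le T}(g)=\mathcal R^{s}_{\le T}\,g$, $\mathcal R^{s}_{T_1}\mathcal R^{s}_{T_2}\subseteq\mathcal R^{s}_{T_1+T_2}$, and the rank conditions depend only on $\L$, hence not on the base point; and $U^{s}_{f}(\tau)$ depends smoothly, jointly, on $s$, $f\in L^2$ and $\tau$, so surjectivity of $d_fE^{s}_{\tau}$ at a point is an open condition in $(s,\tau)$ since $G$ is finite-dimensional.

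For the first bullet I would proceed as follows. Fix the nominal system $s_0$. Using (a), (b) and compactness one first gets a finite $T^{*}$ with $\mathcal R^{s_0}_{\le T^{*}}=G$: if $p\in\operatorname{int}\mathcal R^{s_0}_{\le\delta_0}$, then for each $g$, choosing $\tau_g$ with $gp^{-1}\in\mathcal R^{s_0}_{\le\tau_g}$ gives $g\in\mathcal R^{s_0}_{\le\tau_g}\operatorname{int}\mathcal R^{s_0}_{\le\delta_0}\subseteq\operatorname{int}\mathcal R^{s_0}_{\le\tau_g+\delta_0}$, and the increasing open sets $\operatorname{int}\mathcal R^{s_0}_{\le T}$ cover the compact $G$. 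Next fix, once and for all, a regular control $h_0$ of some duration $\delta_1\le 1$ for $s_0$ (from (b)) and set $w_0=U^{s_0}_{h_0}(\delta_1)$. For an arbitrary target $q\in G$, reach $w_0^{-1}q\in G=\mathcal R^{s_0}_{\le T^{*}}$ in some time $\le T^{*}$ and concatenate the corresponding control with $h_0$: by right-invariance this produces a control of duration $\le T^{*}+1$ steering $I$ to $q$, and because the last leg is $h_0$ the endpoint map of the concatenated family (first leg frozen, last leg varying) is a submersion onto a neighbourhood of $q$. The implicit function theorem, applied with $s$ as a parameter, then delivers a neighbourhood $V_q\ni q$ and a neighbourhood $\N_q$ of $s_0$ with $V_q\subseteq\mathcal R^{s}_{\le T^{*}+1}$ for every $s\in\N_q$. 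Covering $G$ by finitely many $V_{q_1},\dots,V_{q_r}$ and setting $\N=\bigcap_j\N_{q_j}$, $T_{\max}=T^{*}+2$ finishes this case.

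For the second bullet the same machinery applies, once one observes that the exact-time rank condition implies the ordinary one (so $s_0$ is controllable and, by the exact-time controllability result for strongly accessible systems on compact connected groups (Theorem~13 of \cite{jurdjevic1997}), $\mathcal R^{s_0}_{T}=G$ for all $T>T^{0}_{c}$ for some $T^{0}_{c}$), and that for \emph{any} system $s$ one has the ``up-set'' property: if $\mathcal R^{s}_{T_1}=G$ then $\mathcal R^{s}_{T}\supseteq\mathcal R^{s}_{T-T_1}\mathcal R^{s}_{T_1}=\mathcal R^{s}_{T-T_1}\,G=G$ for all $T>T_1$, the factor $\mathcal R^{s}_{T-T_1}$ being nonempty as it contains the pure-drift propagator. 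It therefore suffices to produce a neighbourhood $\N$ of $s_0$ and a single time $T_1$ with $\mathcal R^{s}_{T_1}=G$ for all $s\in\N$. I would take $T_1=T^{0}_{c}+2$, pick (from (c)) a regular control $h_0$ of duration exactly $1$ for $s_0$ with $w_0=U^{s_0}_{h_0}(1)$, and for each $q\in G$ reach $w_0^{-1}q$ in exact time $T_1-1>T^{0}_{c}$ and append $h_0$ to land at $q$ in exact time $T_1$ with submersive last leg; the implicit function theorem and a finite subcover of $G$ then give $\N$ with $\mathcal R^{s}_{T_1}=G$, and the up-set property upgrades this to $\mathcal R^{s}_{T}=G$ for all $T>T_1=:T_c$ and all $s\in\N$.

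The step I expect to cause trouble is the ``regularization'': the controls handed to us by controllability need not be regular points of the endpoint map, so the implicit-function argument cannot be applied to them as they stand. The fix is to splice on the short regular control $h_0$ at the end — legitimate precisely because the (exact-time) rank condition is intrinsic to $\L$ and therefore survives restarting the system at the arbitrary point reached by the first leg, so that $h_0$ can be chosen once and independently of the target. Once this is in place the endpoint map of the concatenated family is a submersion hitting a full neighbourhood of the desired target, and the quantitative implicit function theorem (surjective differential with a right inverse depending continuously on $s$) together with compactness of $G$ — and, in the second bullet, the elementary up-set observation that frees us from having to control a whole interval of target times — assemble the uniform statement. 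The remaining ingredients, (a)--(c) and the exact-time controllability of a single nominal system, are classical.
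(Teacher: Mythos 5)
Your overall route is genuinely different from the paper's: the paper applies the inverse function theorem once to Jurdjevic's piecewise-constant family $F(t_1,\dots,t_n)$ augmented by the pair $(A,B)$ as extra variables, obtains a single open set $W$ reachable uniformly for all nearby systems, and then covers $G$ by a semigroup power $W^k$; you instead regularize target-by-target by splicing a submersive tail onto a controllability leg, apply a parametric implicit function theorem, and take a finite subcover over targets. Your second bullet is sound along these lines, and the up-set observation $\mathcal{R}^s_T\supseteq\mathcal{R}^s_{T-T_1}\mathcal{R}^s_{T_1}=G$ is a clean way to make the critical time uniform over the neighbourhood.

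There is, however, a genuine gap in the first bullet. Your fact (b) asserts that nonempty interior of $\mathcal{R}^s_{\le T}$ is \emph{equivalent} to some control of duration $\tau\le T$ being a regular point of the fixed-duration endpoint map $E^s_\tau$; this is false under the plain Lie algebra rank condition, which is exactly the hypothesis of the first bullet. When the exact-time condition fails --- e.g.\ $\operatorname{Tr}H_1=0$ but $\operatorname{Tr}H_0\neq 0$ with $G=\UU(N)$, precisely the situation the paper later emphasizes --- every exact-time reachable set $\mathcal{R}^s_\tau$ lies in the proper coset $e^{-i\tau\operatorname{Tr}(H_0)/N}\,\SU(N)$, so no fixed-duration endpoint map $E^s_\tau$ is ever a submersion onto $G$, and the regularizing tail $h_0$ of fixed duration $\delta_1$ that your construction hinges on does not exist. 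Krener's theorem only yields a regular point of the time-extended map $(h,\delta)\mapsto U^s_h(\delta)$; the missing drift direction must be supplied by varying the duration. The fix preserves your architecture: let the tail's duration vary along with the tail control in the parametric implicit-function step (the total time stays below $T^*+2$), which is in effect what the paper's family already does, since its free variables are the durations $t_i$ themselves. The second bullet is unaffected, because there the exact-time rank condition does guarantee fixed-duration regular controls.
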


\begin{proof}
If system specified by $(H_0,H_1)$ satisfies the Lie algebra rank
condition, by Theorem 1 of Ref.~\refcite{jurdjevic1997} there is a set of values
$v_1,\ldots,v_n$ such that the function
\begin{align*}
 F :(t_1, \ldots, t_n) \mapsto
  e^{-i(H_0+v_n H_1) t_n} \cdots e^{-i( H_0 + v_1 H_1) t_1}
\end{align*}
has Jacobian $\dd F$ of full rank at some point $\vec{t}$. 
This implies that the extended function 
\begin{align*}
  F': (A,B,t_1,\ldots,t_n) \mapsto 
  (A,B, e^{-i(A+ v_n B) t_n} \cdots e^{-i( A + v_1 B) t_1})
\end{align*} 
also has Jacobian 
\(
\dd F' = \left(\begin{smallmatrix}
    I & 0 & 0\\
    0 & I & 0\\
    \ast & \ast & \dd F
\end{smallmatrix}\right)
\) of full rank at the point $(H_0, H_1,\vec{t})$.  Thus by the inverse
function theorem, choosing neighborhoods $\N$ of $(H_0,H_1)$ and $V$ of
$\vec{t}$ bounded, there is a neighborhood $W$ of $F(\vec{t})$ such
that $F'(\{A\} \times \{B\} \times V)$ includes $\{A\}\times\{B\} \times
W$ for every $(A,B)\in\N$.  By compactness (and connectedness), some
power $W^k$ covers the entire group $G$.  Letting $T$ be the supremum of
$\sum_n t_n$ over $V$, we see that for all systems in $\N$ the entire
group is reachable in less than time $kT$.
  
Assuming $(H_0,H_1)$ satisfies the exact-time Lie algebra rank condition, 
the same argument may be repeated with 
\begin{align*}
  F:(t_0,\ldots,t_n) \mapsto 
 e^{-i(H_0+v_n H_1)t_n} \cdots e^{-i(H_0+v_0 H_1) t_0}
\end{align*}
as constructed in the proof of Theorem 3 in
\cite{jurdjevic1997}, where now $F$ has domain restricted
to $\sum_n t_n = T$. This alteration shows that the critical time $T_c$
of exact-time controllability can indeed be chosen uniformly about
$(H_0,H_1)$, as claimed.
\end{proof}

\subsection{Gradient and Hessian formulas}

Finding the global optimum of a function is generally a very difficult
task.  On the other hand, there are many efficient algorithms to find
local extrema, i.e., attractive critical points of a function.  For this
reason we are particularly interested in the nature of the critical
points of the various functionals $\F$, which depends only on certain
local properties of the solution functional $U_f (t)$. To study the
latter, we need the identity
\begin{equation}
   U_{f+{\Delta}f}(t)-U_f(t) = -i\int_{t_0}^t 
   U_f(t,\tau)[H_{f+{\Delta}f}(\tau)-H_f(\tau)]U_{f+{\Delta}f}(\tau)\,\dd\tau
\end{equation}
which can be verified by differentiating both sides, and used to derive
\begin{multline}
  \label{eq:pert}
  U_{f+{\Delta}f}(T)
   = U_f(T) -i\int_{0}^T U_f(T,\tau)\Delta H(\tau)U_f(\tau)\dd\tau\\
     - \int_0^T\!\!\int_0^\tau U_f(T,\tau)\Delta H(\tau)
       U_f(\tau,\sigma){\Delta}H(\sigma)U_{f}(\sigma) 
       \dd\sigma\dd\tau + O(\norm{\Delta f}^3)
\end{multline}
where $\Delta H(\tau)=H_{f+\Delta f}(\tau)-H_f(\tau)$ is just $\Delta
f(\tau) H_1$ in the $M=1$ case.  For $\alpha,\beta\in L^2(0,T)$ we
define the linear map
\begin{equation}
  \label{eq:grd}
  \alpha \mapsto 
  \int_0^T \frac{\delta U_f(T)}{\delta f(\tau)} \alpha(\tau) \dd\tau, \qquad
  \frac{\delta U_f(T)}{\delta f(\tau)}= -iU_f(T,\tau)H_1 U_f(\tau),
\end{equation}
corresponding to the gradient of $U_f(T)$ at $f$, and the bilinear map
\begin{equation}
 \label{eq:hess}
  (\alpha,\beta) \mapsto
  - \!\!\!\underset{0<\sigma<\tau<T}{\int\int}\!\!\! U_f(T,\tau)\Delta H(\tau)
       U_f(\tau,\sigma){\Delta}H(\sigma)U_{f}(\sigma) 
       [\alpha(\tau)\beta(\sigma)+\beta(\tau)\alpha(\sigma)]\dd\sigma\dd\tau 
\end{equation}
corresponding to its Hessian.  Indeed, the second order perturbative
expansion (\ref{eq:pert}) is the functional analog of the Taylor
expansion of a function, with the gradient and Hessian functionals
corresponding to the first and second order terms, respectively.  Since
the Hessian is naturally a linear operator mapping $L^2(0,T)$ to itself,
it is convenient to have notation for referring directly to some such
operators, along with the elements of $L^2$ themselves.  We shall use
$\Pi$ to denote the unnormalized projection onto its argument, and
$\bullet$ as a placeholder for the argument of the anonymous function it
is a constituent of.  For example, $\Pi[\cos(\omega\bullet)]$ refers to
the projector onto the function $t\mapsto \cos(\omega t)$, that is, the
operator that maps each $\alpha \in L^2(0,T)$ to $\cos(\omega\bullet)\int_0^T
\cos(\omega\tau) \alpha(\tau)\dd\tau$ in $L^2(0,T)$.

The importance of the gradient and the Hessian functionals is that the
former determines the critical points of the functional $\F$, and the
latter their nature.  In particular, if the Hessian at a critical point
$f$ is negative definite, then the critical point is attractive and
corresponds to a local maximum over any finite-dimensional subspace of
$L^2(0,T)$; similarly, if it is positive definite, the critical point is
repulsive and corresponds to a local minimum over any finite-dimensional
subspace, and if the Hessian at the critical point can take both
positive and negative values, the critical point is a saddle.

\section{Pure-state landscape}

\subsection{All Critical Points over $\SU(N)$ or $\UU(N)$ global extrema?}

As mentioned in the introduction the problem of finding a control $f\in
L^2(0,T)$ to maximize the pure-state fidelity functional~(\ref{eq:FP})
can be viewed a composition of finding a $U_*\in \UU(N)$ or $\SU(N)$
that maximizes $\F_P$, and an optimization problem of finding a control
in $L^2(0,T)$ that realizes $U_f(T)=U_*$.  Defining
$A=\ket{\Psi_g}\bra{\Psi_g}$ and $\rho_0=\ket{\Psi_0}\bra{\Psi_0}$, we
have $|\bra{\Psi_g}U\ket{\Psi_0}|^2=\Tr[AU\rho_0 U^\dag]$, i.e., we can
rewrite the pure-state optimization problem as a general observable
optimization problem.  It is easy to show that a necessary and
sufficient condition for $U\in\UU(N)$ or $\SU(N)$ to be a critical point
of $\Tr[AU\rho_0U^\dag]$ is that $A$ and $U\rho_0 U^\dag$ commute, i.e.,
$[A,U\rho_0U^\dag]=0$.  In our case, as $A$ and $\rho_0$ (and thus
$U\rho_0 U^\dag$) are projectors onto pure states, there are only two
types of critical points: (i) $A$ and $U\rho_0 U^\dag$ are projectors
onto orthogonal subspaces of $\H$, in which case we have $\Tr[A U\rho_0
U^\dag]=0$, and (ii) $A$ and $U\rho_0 U^\dag$ are projectors onto the
same 1D subspace of $\H$, in which case we have $\Tr[A U\rho_0
U^\dag]=1$.  Thus, the landscape for pure-state optimization over
$\UU(N)$ or $\SU(N)$ is very simple: there are only two types of
critical points, corresponding to extremal values of the fidelity, i.e.,
global extrema, and no saddles or other critical points for which the
fidelity assumes values in $(0,1)$.  The critical points over $\SU(N)$
or $\UU(N)$ are sometimes referred to as \emph{kinematic} critical
points\cite{wu2009}.

Since then several papers have attempted to show that this result
extends to the actual optimization problem over $L^2(0,T)$, i.e., that
the fidelity $\F_P$ as a functional over $L^2(0,T)$ only has critical
points for which it achieves either its global minimum $0$ or maximum
$1$.  However, the arguments put forward in Ref.~\refcite{rabitz2004a}
and \refcite{rabitz2004b} rely, without rigourous justification, on the
property that the solution functional $f\mapsto U_f(T)$ is of full rank
\emph{everywhere}.  Ref.~\refcite{hsieh2008b} gives another proof but
relies on the same assumption.  Ref.~\refcite{rabitz2006a} argues that a
certain sequence of expressions starting with equation (14) and (15)
therein generate the full Lie algebra generated by $iH_0$ and $iH_1$,
and hence, assuming controllability, span the entire Lie algebra
$\su(N)$.  But this procedure only allows the generation of specific
linear combinations of commutators and, although it may seem plausible,
these are not guaranteed to span $\su(N)$.  For the more general density
matrix control problem, Ref.~\refcite{rabitz2006a} similarly assumes
that the map $f\mapsto U_f(T)$ is of full rank everywhere to derive that
the final state $U_f(T) \rho_0 U_f (T)^{\dag}$ must commute with the
observable $A$ at critical points, and hence that $0$ and $1$ are the
only critical values for pure states. Ref.~\refcite{ho2006} makes use of
the same assumption, referring to Ref.~\refcite{rabitz2006a} as evidence
of its validity, to re-derive this critical point characterization,
along with the result that all but the global maxima and minima are
saddle points.  Ref.~\refcite{wu2008} recognises the possibility of
singular controls $f$, here of $f\mapsto \Psi_f(T)$ not being full rank
for a particular control $f$, along with the issues this raises, but
defers analysis of such points to future work.  Ref.~\refcite{wu2009}
gives some characterization of singular controls and acknowledges that
these can be critical points of the performance index that do not
correspond to kinematical critical points, but considers only one
example involving eigenstates of a four-level system, to conclude based
on numerical simulations that singular controls do not appear to be
traps for this problem.  Therefore, the question of the existence of
non-kinematic critical points and their nature for the pure-state
fidelity~(\ref{eq:FP}), i.e., critical points $f \in L^2(0,T)$ for which
$\F_P$ does not assume extremal values, remains open.  Furthermore, for
the observable optimization problem (\ref{eq:FD}) it has been shown that
critical points exist that are traps at least to second order, i.e., for
which the Hessian is negative semi-definite but the fidelity does not
assume its global maximum value\cite{pechen2011}.

In this work we show that non-kinematical critical points exist for a
large class of pure-state control problems.  Since the mapping $f\mapsto
\Psi_f(T)$ is \emph{not} of full rank at these singular points, it shows
that the full-rank-everywhere hypothesis made in many of the works cited
above is generally not satisfied.  So even if the landscape was indeed
guaranteed to be trap-free under this hypothesis, we cannot draw any
conclusions from these results about the existence of traps for quantum
control problems.  This does not immediately imply the existence of
traps in the control landscape as these singular critical points need
not be attractive.  However, we further show that one can systematically
construct examples of non-kinematic critical points at which the Hessien
shows is \emph{strictly negative definite}.  This is possible as the
Hessian need not have finite rank at singular critical points, and thus
can have infinitely many negative eigenvalues.  Thus, not only do
non-kinematic critical points exist but they can be \emph{attractive},
i.e., traps.  The implication of these results is that any landscape
analysis based on considering only regular points is incomplete and
inconclusive, and the existence of examples for which traps provably
exist shows that the landscape is \emph{not universally trap-free}.

\subsection{Any critical value possible for critical points over $L^2(0,T)$?}

Let us start by considering, for any choice of final time $T$, a pure
state control problem for a two-level system of the general form:
\begin{subequations} 
\label{genpure}
\begin{align} 
  H_0 = \begin{pmatrix}
    a & 0\\
    0 & b
  \end{pmatrix} &, \quad H_1 = \begin{pmatrix}
    c & d\\
    \overline{d} & c
  \end{pmatrix} \\
  \bra{\Psi_g} = \frac{1}{\sqrt{2}} 
   [1 , e^{i \phi} ] e^{i T H_0} &, \quad 
   \ket{\Psi_0} = \frac{1}{\sqrt{2}} 
   \begin{bmatrix} e^{i \theta} \\ e^{- i \left( \theta + \phi \right)}\end{bmatrix}, 
\end{align}
\end{subequations}
where we can exclude the degenerate case $\ket{\Psi_g}=\ket{\Psi_0}$ by
choosing $\theta \not\equiv \frac{b-a}{2} T \pmod{ \pi }$.  For the
field $f \equiv 0$, which is identically zero over $[0,T]$, we have
$\bra{\Psi_g}U_f(T)\ket{\Psi_0} = \cos(\theta)$ and thus the pure-state
transfer fidelity $\F_P=|\bra{\Psi_g}U_f(T)\ket{\Psi_0}|^2 =\cos^2 (\theta)$.
The gradient is
\begin{align*}
  \frac{\delta\F_P}{\delta f(t)}
&= \bra{\Psi_g} \tfrac{\delta U_f(T)}{\delta f(t)} \ket{\Psi_0} \overline{\bra{\Psi_g}U_f(T)\ket{\Psi_0}} 
  + \bra{\Psi_g} U_f(T) \ket{\Psi_0} \overline{\bra{\Psi_g}\tfrac{\delta U_f(T)}{\delta f(t)}\ket{\Psi_0}} \\
&= 2\Re \left[\bra{\Psi_g} \tfrac{\delta U_f(T)}{\delta f(t)} \ket{\Psi_0} \overline{\bra{\Psi_g}U_f(T)\ket{\Psi_0}} \right]
\end{align*}
and inserting $\tfrac{\delta U_f}{\delta f(t)}= -i U_f(T)U_f(t)^\dag H_1 U_f(t)$
from (\ref{eq:grd}) this gives
\begin{equation*}
 \frac{\delta \F_P}{\delta f(t)} 
  = 2 \Im \left(\langle \Psi_g |U_f(T)U_f(t)^{\dag} H_1 U_f(t) | \Psi_0 \rangle 
   \overline{\langle \Psi_g |U_f(T)|\Psi_0 \rangle} \right)
\end{equation*}
For $f \equiv 0$ this gradient formula evaluates to
\begin{equation*}
 \Im \left( c e^{i \theta} + d e^{i ( a - b ) t - i ( \theta
   + \phi )} + \text{c.c.} \right) \cos ( \theta ) = 0
   \quad \text{ for all } t,
\end{equation*}
showing that $f=0$ is a critical point.  A value of $\theta$ can be
chosen to achieve any fidelity in the interval $(0,1)$, and this can be
done avoiding the degenerate case by choosing the right sign for
$\theta$ when necessary.  So we already see there is a large family of
control problems admitting critical points for which the fidelity
assumes any possible value but this is simply a special case of the
following general theorem.

\begin{theorem}
\label{thm:1} Given any bilinear control system with control Hamiltonian
$H_0 + f (t) H_1$, any target time $T>0$, and any $F \in (0,1)$, there
exist pairs of initial and target states $\ket{\Psi_0}, \ket{\Psi_g}$
and a control $f$ such that $f$ is a critical point of the fidelity
$\F_P$ with critical value $\F_P = F$.
\end{theorem}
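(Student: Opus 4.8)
The plan is to take the control constant, $f\equiv\phi$, so that $U_f(t)=e^{-iH_\phi t}$ with $H_\phi:=H_0+\phi H_1$, and to place the two states in a two‑dimensional subspace spanned by two eigenvectors of $H_\phi$. First I would substitute $f\equiv\phi$ into the gradient $\tfrac{\delta\F_P}{\delta f(t)}=2\Im\!\big(\bra{\Psi_g}U_f(T)U_f(t)^\dag H_1U_f(t)\ket{\Psi_0}\,\overline{\bra{\Psi_g}U_f(T)\ket{\Psi_0}}\big)$ and abbreviate $\ket{\psi}:=e^{iH_\phi T}\ket{\Psi_g}$; then $\bra{\Psi_g}U_f(T)\ket{\Psi_0}=\ip{\psi}{\Psi_0}$, while the remaining bracket is $\bra{\psi}e^{iH_\phi t}H_1e^{-iH_\phi t}\ket{\Psi_0}$. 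We may assume $\ip{\psi}{\Psi_0}\ne0$ (else $\F_P=0\notin(0,1)$), and after rephasing $\ket{\Psi_g}$ so that $\ip{\psi}{\Psi_0}\ge0$, the statement ``$f$ is critical with value $F$'' reduces to the two requirements $\ip{\psi}{\Psi_0}=\sqrt F$ and $\bra{\psi}e^{iH_\phi t}H_1e^{-iH_\phi t}\ket{\Psi_0}\in\RR$ for all $t\in[0,T]$; finally $\ket{\Psi_g}=e^{-iH_\phi T}\ket{\psi}$ will give $\F_P=|\ip{\psi}{\Psi_0}|^2=F$.

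Next I would make the realness condition finite by taking $\ket{\Psi_0},\ket{\psi}\in\mathrm{span}\{\ket a,\ket b\}$ for eigenvectors $\ket a,\ket b$ of $H_\phi$ (eigenvalues $\nu_a,\nu_b$). In that basis $\bra{\psi}e^{iH_\phi t}H_1e^{-iH_\phi t}\ket{\Psi_0}$ is a $t$‑independent ``diagonal'' term built from $\bra aH_1\ket a,\bra bH_1\ket b$ plus an oscillating term $\propto e^{\pm i(\nu_a-\nu_b)t}$ built from $\bra aH_1\ket b$. If $\bra aH_1\ket b=0$ the oscillating term is absent and the real pair $\ket{\psi}=\cos\gamma\ket a+\sin\gamma\ket b$, $\ket{\Psi_0}=\cos\gamma\ket a-\sin\gamma\ket b$ makes everything real, with $F=\cos^2 2\gamma$ free in $(0,1)$. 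If $\bra aH_1\ket b\ne0$, the ansatz $\ket{\psi}=\tfrac1{\sqrt2}(\ket a+e^{i\gamma}\ket b)$, $\ket{\Psi_0}=\tfrac1{\sqrt2}(e^{i\alpha}\ket a+e^{i(\gamma-\alpha)}\ket b)$ makes the oscillating term and $\ip{\psi}{\Psi_0}=\cos\alpha$ real for free, and a short computation shows the diagonal term is real iff $\bra aH_1\ket a=\bra bH_1\ket b$, with $F=\cos^2\alpha$. So the construction succeeds as soon as $H_\phi$ has two eigenvectors with equal $H_1$‑expectation.

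It remains to choose $\phi$ so that $H_\phi$ has such a pair. If $H_\phi$ is degenerate for some $\phi$, take $\ket{\Psi_0},\ket{\psi}$ inside a degenerate eigenspace $E'$ with $\ket{\psi}$ an eigenvector of $PH_1P$ ($P$ = projector onto $E'$): there $e^{-iH_\phi t}$ acts as a scalar on $E'$, so $\bra{\psi}e^{iH_\phi t}H_1e^{-iH_\phi t}\ket{\Psi_0}=\bra{\psi}H_1\ket{\Psi_0}$ equals a real eigenvalue times $\ip{\psi}{\Psi_0}$, and $\ip{\psi}{\Psi_0}$ is set to $\sqrt F$. Otherwise $H_\phi$ is nondegenerate for every $\phi$, so $g(\phi):=\lambda_{\max}(H_\phi)-\lambda_{\min}(H_\phi)$ is smooth and strictly positive; discarding the trivial case $H_1\propto I$ (then $\bra aH_1\ket b\equiv0$ and $\phi=0$ works at once), $g(\phi)=|\phi|\big(\lambda_{\max}(H_1)-\lambda_{\min}(H_1)\big)+O(1)\to\infty$, so $g$ attains a global minimum at some $\phi^\ast$, and $g'(\phi^\ast)=0$ is, by the Hellmann--Feynman formula, precisely $\bra{v_{\max}}H_1\ket{v_{\max}}=\bra{v_{\min}}H_1\ket{v_{\min}}$ for the extreme eigenvectors of $H_{\phi^\ast}$ --- the required pair.

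The hard part is this last step, and its point is that allowing $\phi\ne0$ is genuinely necessary: at $\phi=0$ the condition ``two eigenvectors of $H_0$ with equal $H_1$‑expectation'' can simply fail (already for a two‑level system with $[H_0,H_1]\ne0$ and $\bra1H_1\ket1\ne\bra2H_1\ket2$), so one cannot just transplant the $f\equiv0$ example preceding the theorem; the non‑routine ingredient is the asymptotics of the eigenvalue branches of $H_0+\phi H_1$ as $\phi\to\pm\infty$ (they follow the spectrum of $H_1$, oppositely ordered at the two ends), which is what forces $g$ to have an interior minimum and hence two branches with coincident slopes.
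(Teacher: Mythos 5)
Your proposal is correct and follows essentially the same route as the paper: restrict to a constant control, absorb it into $\tilde H_0=H_0+\mu H_1$, split into the degenerate case (work inside a degenerate eigenspace, diagonalizing $PH_1P$) and the nondegenerate case (find $\mu$ at which the top and bottom eigenvectors of $H_0+\mu H_1$ have equal $H_1$-expectation, then use a two-dimensional phase ansatz realizing any $F=\cos^2\alpha$). The only divergence is how that $\mu$ is produced --- the paper applies the intermediate value theorem directly to the difference of the two expectations using their $\mu\to\pm\infty$ limits, while you minimize the spectral gap and invoke Hellmann--Feynman; these are the same asymptotic fact packaged differently, and both are valid.
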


\begin{proof}
Any bilinear control system $H_0+f(t)H_1$ with $f(t)=\tilde{f}(t)+\mu$
is equivalent to one with Hamiltonians $\tilde{H}_0 = H_0+\mu H_1, H_1$
and control $\tilde{f}(t)$.

Case 1: Suppose there is a value of $\mu\in\RR$ such that $\tilde{H}_0$
has degenerate eigenvalues.  Choose a two-dimensional subspace $\SS$ of
this eigenspace.  Then, restricted to the subspace $\SS$ the Hamiltonians
take the form
\begin{equation} 
   \label{unconpure}
    \tilde{H}_0 = \begin{pmatrix}
      a & 0\\
      0 & a
    \end{pmatrix}, \quad 
   H_1 = \begin{pmatrix}
      b & 0\\
      0 & c
    \end{pmatrix}. 
\end{equation}
It easy to see that $\tilde{f}\equiv 0$ is a critical point with
fidelity $\cos^2(\theta-\phi)$ for any $T$ for the state transfer
problem $\ket{\Psi_0} \mapsto \ket{\Psi_g}$ on $\SS$ with
\begin{equation*}
\bra{\Psi_g} =
    [\cos (\theta),\sin (\theta) e^{- i \gamma}], \quad
\ket{\Psi_0} = \begin{bmatrix} \cos (\phi) \\ \sin (\phi) e^{i \gamma} \end{bmatrix}
\end{equation*}
if the Hamiltonians restricted to $\SS$ take the form~(\ref{unconpure}).
Since $\SS$ is an invariant space of $U(t)=e^{-it\tilde{H}_0}$, if the
initial and final states are in $\SS$, the formulas for the fidelity and
gradient simplify to their analogues restricted to $\SS$.  Therefore,
$\tilde{f}=0$ is a critical point for the state transfer problem
$\ket{\Psi_0} \mapsto \ket{\Psi_g}$ for the original system if we embed
the initial and target state into the full Hilbert space $\H$ in the
obvious manner.

Case 2: If the eigenvalues of $H_0+\mu H_1$ are always distinct then we
can continuously parametrize the (unique) eigenvectors $v^{\max}(\mu)$
and $v_{\min}(\mu)$ corresponding to the maximal and minimal eigenvalues
of $H(\mu)=\tfrac{1}{|\mu|+1}(H_0+\mu H_1)$, respectively.  Letting
$\lambda^{\max}$, $\lambda_{\min}$ be the maximal and minimal
eigenvalues of $H_1$, we have the identities
\begin{align*}
 \underset{\mu\to-\infty}{\lim} \bra{v^{\max}(\mu)}(-H_1 )\ket{v^{\max}(\mu)}
  &= -\lambda_{\min} & \underset{\mu\to\infty}{\lim}
     \bra{v^{\max}(\mu)} H_1 \ket{v^{\max}(\mu)}
  &= \lambda^{\max} \\ \underset{\mu\to-\infty}{\lim}
     \bra{v_{\min}(\mu)}(-H_1) \ket{v_{\min}(\mu)}
  &= -\lambda^{\max} & \underset{\mu\to\infty}{\lim} 
     \bra{v_{\min}(\mu)} H_1 \ket{v_{\min}(\mu)} 
  &= \lambda_{\min}
\end{align*}
Hence by continuity we must have
$\bra{v^{\max}(\mu)}H_1\ket{v^{\max}(\mu)} = \bra{v_{\min}(\mu)} H_1
\ket{v_{\min}(\mu)}$ for some (finite) $\mu$, unless $\lambda^{\max} =
\lambda_{\min}$, which would mean that $H_1$ is a multiple of the
identity and so any value of $\mu$ would do.  Using this $\mu$ leads to
an effective system $\tilde{H}_0, H_1$ which, when restricted to the
subspace $\SS$ spanned by $\{v^{\max}(\mu), v_{\max}(\mu)\}$, and
expressed in this basis, is precisely of the form (\ref{genpure}).
Again we can extend the initial and target states in (\ref{genpure}) to
the full Hilbert space in a trivial way so that $\tilde{f}\equiv 0$ is a
critical point with fidelity $\cos^2\theta$ for the resulting state
transfer problem with the Hamiltonian $\tilde{H}_0+\tilde{f}H_1$.
Restriction to $\SS$ is again a well-behaved procedure since the span of
each eigenvector is invariant under the time evolution generated by the
corresponding Hamiltonian $\tilde{H}_0$.
\end{proof}

This theorem shows that for any control system, in particular any
controllable one, there exist pairs of initial and target states for
which we can achieve critical values of the fidelity in $(0,1)$.  The
initial and target states which exhibit such critical points in the
proof above are generally not eigenstates of $H_0$ and one might wonder
whether critical points for which the fidelity does not achieve extremal
values might not exist if the initial and target states are restricted
to eigenstates of the system Hamiltonian $H_0$, as is the case in many
applications, but even in this case there are counter-examples.

\begin{example}{\bf Non-extremal critical points for eigenstate transfer.}
Consider state transfer from $\ket{\Psi_0}$ to $\ket{\Psi_g}$ for
the system $H_f(t)=H_0+f(t) H_1$, where
\begin{align*}
  H_0 = \begin{pmatrix} 1 & 0 & 0 \\ 0 & 2 & 0 \\ 0 & 0 & 4
	\end{pmatrix}, \quad
  H_1 = \begin{pmatrix} 1 & \sqrt{\tfrac{2}{3}} & 0 \\
                        \sqrt{\tfrac{2}{3}} & 2 & \sqrt{\tfrac{1}{3}} \\
                         0 & \sqrt{\tfrac{1}{3}} & 4
        \end{pmatrix}, \quad
 \ket{\Psi_0}= \begin{bmatrix} 1 \\ 0 \\ 0 \end{bmatrix}, \quad
 \ket{\Psi_g}= \begin{bmatrix} 0 \\ 0 \\ 1\end{bmatrix}.
\end{align*}
The initial and target states are eigenstates of the system Hamiltonian
and the system is controllable as $H_0$ has distinct transition
frequencies and $H_1$ is connected.  However, for $T=2\pi$, for example,
the transfer fidelity $\F_P$ has a critical point at $f \equiv -1$ as
$\tfrac{\delta\F_P}{\delta f}= \tfrac{8}{9}\sin^2(\tfrac{T}{2})\sin(T)=0$ with fidelity
$\F_P=\frac{8}{9}\sin^4(\tfrac{T}{2})=\frac{8}{9}<1$.
\end{example}

\subsection{Attractive suboptimal critical points over $L^2(0,T)$}

We can furthermore show that the critical points above, for which the
fidelity does not assume extremal values, are not necessarily saddle
points either.  We show this for pure state examples but the problems
can be reformulated as observable optimization problems with
$A=\ket{\Psi_g}\bra{\Psi_g}$ for a density matrix that is a projector
onto the pure state $\ket{\Psi(t)}$.  Hence the examples also show the
existence of non-kinematic critical points for which the observable
fidelity $\F_D=\Tr[A\rho(T)]$ assumes non-extremal values, which are not
saddle points.

\begin{example}{\bf Critical saddle points with negative semi-definite Hessian.}
Given any target time $T$ and $H_f(t)=H_0+f(t)H_1$, choose
\begin{align*}
  H_0 = \begin{pmatrix} 
	 2 & 0 & 0 & 0\\ 
	 0 & 4 & 0 & 0\\ 
	 0 & 0 & 5 & 0\\
	 0 & 0 & 0 & 9
	\end{pmatrix}, \quad
  H_1 = \begin{pmatrix} 
	 0 & 1 & 0 & 0\\ 
	 1 & 0 & 1 & 0\\
         0 & 1 & 0 & 1\\ 
	 0 & 0 & 1 & 0
        \end{pmatrix}, \quad
  \ket{\Psi_0}= \begin{bmatrix}
                \cos\phi \\ 0 \\ 0 \\ \sin\phi
                \end{bmatrix}, \quad
  \ket{\Psi_g}= \begin{bmatrix} 
                e^{-i2T}\cos\theta \\ 0 \\ 0 \\ e^{-i9T}\sin\theta
                \end{bmatrix}.
\end{align*}
$f\equiv 0$ is a critical point of the fidelity as it can easily be
verified that $\bra{\Psi_g}U_0(T,t)H_1 U_0(t)\ket{\Psi_0}=0$.  The
corresponding critical value $\F_P=\cos^2(\theta-\phi)<1$ unless
$\theta=\phi \mod \pi$. The Hessian of the fidelity at this point can be
computed as
\begin{multline*}
-2 \cos (\theta-\phi) \!\!\!\underset{0<s<t<T}{\int\!\int}\!\! 
 [\cos\theta\cos\phi \cos(2(s-t)) 
 + \sin \theta \sin \phi \cos(4(s-t)) ] \gamma(s,t) \dd s \dd t
\end{multline*}
where $\gamma(s,t) = \alpha(s)\beta(t)+ \beta(s) \alpha(t)$ accounts for
the necessary symmetrization, and can be written as two instances of the
operator $C$ defined in the appendix
\begin{multline*}
  -2 \cos(\theta-\phi)
  \Big[\cos\theta\cos\phi\big(\Pi[\cos(2\bullet)]+\Pi[\sin(2\bullet)]\big)\\
      +\sin\theta\sin\phi\big(\Pi[\cos(4\bullet)]+\Pi[\sin(4\bullet)]\big)\Big].
\end{multline*}
This expression is negative semi-definite whenever $\cos\theta\cos\phi$
and $\sin\theta\sin\phi$ are both positive, equivalently, when $\theta$
and $\phi$ lie in the interior of the same quadrant.  For example, for
$\theta=\frac{\pi}{6}$ and $\phi=\frac{\pi}{3}$ we obtain
$\F_P=\frac{3}{4}$ and Hessian taking the form $-\frac{3}{4}
\big[\Pi[\cos(2\bullet)]+\Pi[\sin(2\bullet)] +\Pi[\cos(4\bullet)]
+\Pi[\sin(4\bullet)]\big]$.  Thus $f$ is a critical point, which is a
trap to second order.  This means that there exists a neighborhood of
the critical point from which no points can escape if we consider only
the second-order perturbative expansion of the fidelity around the
critical point.  However, such second-order traps are usually
third-order saddles as on the subspace where the Hessian vanishes the
local dynamics is determined by the first non-vanishing term and the
third order term usually has indefinite sign.  Indeed, by extending
(\ref{eq:pert}), we can compute the third derivative of $\F_P$
\begin{multline*}
12 \cos ( \theta - \phi ) 
\underset{0 < r < s < t < T}{\int \int \int} 
 [ \cos\theta \sin\phi \sin(4r+s+2t) - \sin\theta \cos\phi \sin(2r+s+4t)] \\
\delta f(r) \delta f(s) \delta f(t) \dd r \dd s \dd t
\end{multline*}
in direction $\delta f$.  When $T \geqslant \pi$, the first and second
order derivatives of $\F_P$ in direction $\gamma(t)=1$ for $t<\pi$ and
$0$ otherwise, vanish and evaluating the triple integral above shows
that the third-order term is proportional to $\sin\left(2(\theta-\phi)
\right)$.  If $\sin(2(\theta-\phi))\neq 0$ then either $+\gamma$ or
$-\gamma$ is a direction of increase and $f$ is a third-order saddle
point.  $\sin(2(\theta-\phi))\neq 0$ unless $\theta-\phi=0$ or $\pi/2$
(modulo $\pi$).  $\theta-\phi=0$ corresponds to the global maximum and
$\theta-\phi=\pi/2$ the global minimum of the fidelity.
\end{example}

This example shows that there are second order traps which are saddle
points.  These points are bad for optimization as escaping from a
neighborhood of such as point is difficult.  However, we can construct
examples of true traps, for which there exists a neighborhood of the
critical point from which no escape is possible.  The reason why the
second order trap above is still a saddle is the fact that the Hessian
is only negative \emph{semi}-definite.  To construct a true trap, we
therefore need to modify the example to ensure that the Hessian is
strictly negative definite, i.e., has no zero eigenvalues.

\begin{example}{\bf Non-extremal critical points with negative definite 
Hessian (traps).}  Consider $H=H_0+f(t)H_1$ as before, and choose the
target time $T=\frac{\pi}{\eps}$ and Hamiltonians and initial and target
states as follows:
\begin{align*}
  H_0 = \begin{pmatrix} 
	 1+\eps & 0 & 0 & 0 \\ 
	 0 & 1 & 0 & 0 \\
         0 & 0 & 2 & 0 \\ 
	 0 & 0 & 0 & 2 
        \end{pmatrix}, \;
  H_1 = \begin{pmatrix}
	 0 & 1 & 0 & 0 \\
	 1 & 0 & 1 & 0 \\
	 0 & 1 & 0 & b \\
	 0 & 0 & b & 0
	\end{pmatrix}, \;
  \ket{\Psi_0} = \frac{1}{\sqrt{2}} \begin{bmatrix} e^{i\theta} \\ 0 \\
				     0 \\ e^{-i\theta}\end{bmatrix}, \;
  \ket{\Psi_g} = \frac{1}{\sqrt{2}} \begin{bmatrix} e^{-iT(1+\eps)} \\ 0
				     \\ 0 \\ e^{-i2T}\end{bmatrix}.
\end{align*}
It is easy to verify that the gradient vanishes identically for $f=0$.
Hence, $f=0$ is a critical point as before.  The fidelity at $f=0$ is
$\F_P=\cos^2\theta$ and the Hessian is
\begin{align*}
 & -\underset{0<s<t<\frac{\pi}{\eps}}{\int\!\!\int} 
    [b^2 \cos^2\theta + \cos\theta \cos(\eps(s-t)-\theta)] \, 
    [\alpha(s)\beta(t)+\alpha(t)\beta(s)] \dd s \dd t\\
=&  \underset{0<s<t<\frac{\pi}{\eps}}{\int\!\!\int} \left[
   -b^2 \cos^2\theta - \cos^2\theta  \cos(\eps(s-t)) 
  - \tfrac{1}{2}\sin(2\theta) \sin(\eps(s-t)) \right]
   \, [\alpha(s)\beta(t)+\alpha(t)\beta(s)] \dd s \dd t
\end{align*}
recalling $\cos(a-b)=\cos(a)\cos(b)+\sin(a)\sin(b)$.  With the
definitions in the appendix the Hessian can be rewritten as
\begin{align*}
 -a_0 \Pi[1] -\cos^2\theta \big(\Pi[\sin(\eps\bullet)]+\Pi[\cos(\eps\bullet)] \big)  
    -\sum_{k=1}^\infty a_k \big( \Pi[\sin(2k\eps\bullet)]+\Pi[\cos(2k\eps\bullet)] \big)  
\end{align*}
with $a_0=b^2\cos^2\theta-\frac{2}{\pi}\sin(2\theta)$ and
$a_k=\frac{2\sin(2\theta)}{\pi (4k^2-1)}$ for $k>0$.  The Hessian will
be strictly negative definite if $a_k>0$ for all $k$, which is
equivalent to $b^2\cos^2\theta>\frac{2}{\pi}\sin(2\theta)>0$.  With
$b=3$, for instance, this double inequality is satisfied for all
$\theta\in(0,\frac{\pi}{3}]$, and this range of $\theta$ yields
fidelities $\F_P=\cos^2\theta \in [\frac{1}{4},1)$. Therefore, for all
of these values of $\theta$, the critical point $f\equiv 0$ corresponds
to a local maximum with $\F_P<1$, at least over any finite dimensional
subspace.

By computing the rank of the relevant Lie algebra, we can verify that
$H_0,H_1$ specifies an (exact-time) controllable system at $\eps=0$, so
that our uniform controllability result applies.  Time-$T$
controllability can therefore be ensured by choosing a sufficiently
small $\eps$, so that $H_0, H_1$ lie in the neighborhood of uniform
controllability and $T=\frac{\pi}{\eps}$ is larger than the minimal time
required for controllability.
\end{example}

This example can easily be generalized to show that traps exist for
entire families of systems, which can have any dimension.  Consider a
system of any dimension with Hamiltonians given by
\begin{align*}
\widetilde{H_0} = \begin{pmatrix}
  H_0 & \mathbf{0}\\
  \mathbf{0} & \mathbf{\ast}
  \end{pmatrix}, \quad
\widetilde{H_1} = \begin{pmatrix}
  0 & 1 & 0 & 0 & \mathbf{0}\\
  1 & \ast & \ast & 0 & \mathbf{\ast}\\
  0 & \ast & \ast & b & \mathbf{\ast}\\
  0 & 0 & b & 0 & \mathbf{0}\\
  \mathbf{0} & \mathbf{\ast} & \mathbf{\ast} & \mathbf{0} &
  \mathbf{\ast}
  \end{pmatrix}
\end{align*}
where $H_0, b$ are those in the previous example, bold characters denote
blocks, and entries with $\ast$ can be chosen arbitrarily (subject to
the result being Hermitian). For the initial and target states
\begin{align*}
  \ket{\widetilde{\Psi_0}} 
  = \begin{bmatrix}\Psi_0 \\ \mathbf{0}\end{bmatrix}, \quad
  \ket{\widetilde{\Psi_g}} 
  = \begin{bmatrix} \Psi_g \\ \mathbf{0} \end{bmatrix}
\end{align*}
the value, derivative and Hessian of this problem matches those of the 
previous example at $f\equiv 0$. Indeed, by construction, the action of 
$\widetilde{H_1}$ matches that of $H_1$ on the $e_1,e_4$ subspace, 
mapping it into the $e_2,e_3$ subspace, while evolution of both these 
invariant subspaces is the same under either $\widetilde{H_0}$ or $H_1$.
So this problem has a sub-optimal local maximum for any $\eps > 0$, as 
in the example.

\subsection{Non-constant singular controls}

Theorem~\ref{thm:1} and the previous results relied on constant controls
to show that critical points with non-extremal values of the fidelity
exist for all systems with $H_f(t)=H_0+f(t)H_1$, i.e., any choice of
$H_0$ and $H_1$, and that such critical points need not be saddle points
but can in fact be traps.  The chief motivation for this choice is that
constant controls enable analytical gradients and Hessian computations
and thus allow us to rigorously prove the existence of non-kinematical
critical points and traps.  Although even a single trap is problematic,
it is worthwhile to consider briefly how the critical point condition
$\delta\F_P/\delta f\equiv 0$ can be satisfied for a given system for
non-constant controls $f(t)$.  Let
$\bra{\Psi_B}=\overline{\bra{\Psi_g}U_f(T)\ket{\Psi_0}}\bra{\Psi_g}
U_f(T)$, then the critical point condition reads
\begin{equation}
  \Im \bra{\Psi_B} U_f(t)^\dag H_1 U_f(t) \ket{\Psi_0} \equiv 0,
\end{equation}
where $U_f(t)$ must satisfy the Schrodinger equation~(\ref{eq:SEU}),
i.e., we have a Differential-Algebraic equation system.  Note that since
$U_f(T)$ is of course not known in advance, we set
$\bra{\Psi_B}=\overline{\ip{\Psi_b}{\Psi_0}}\bra{\Psi_b}$ for some
$\bra{\Psi_b}$ and then $\ket{\Psi_g}=U_f(T)\ket{\Psi_b}$, a posteriori.
While the existence and uniqueness of solutions to such systems is in
general not trivial to ascertain, differentiating the constraint and
using the Schrodinger equation leads to a more explicit form
\begin{subequations}
\begin{align}
  0 &= \Im \bra{\Psi_B}H_1\ket{\Psi_0} \\
  0 &= \Re \bra{\Psi_B}[H_0,H_1]\ket{\Psi_0} \\
  0 &\equiv \Im\bra{\Psi_B} U_f(t)^\dag [H_0+f(t)H_1,[H_0,H_1]]U_f(t)\ket{\Psi_0}.
\end{align}
\end{subequations}
Assuming the first two equations hold, which can be thought of as a
two-dimensional constraint on $\bra{\Psi_B}$, the system can be solved,
at least locally about $t=0$, by adjoining the constraint
\begin{equation}
  f(t) = \label{eq:nonkin}
  -\frac{\Im \bra{\Psi_B} U_f(t)^\dag [H_0,[H_0,H_1]] U_f(t) \ket{\Psi_0}}
        {\Im \bra{\Psi_B} U_f(t)^\dag [H_1,[H_0,H_1]] U_f(t) \ket{\Psi_0}}
\end{equation}
to the Schrodinger equation, under the generically true condition that
the denominator does not vanish at $t=0$.  The additional, also generic,
property that $\dd f/\dd t|_{t=0} \not=0$ guarantees that the critical
point $f$ in question is not a constant function.  As the fidelity
$\F_P$ for the control $f$ is $|\ip{\Psi_b}{\Psi_0}|^2$ we can construct
critical points for which the fidelity does not take extremal values.
Eq.~(\ref{eq:nonkin}) can be solved numerically to find non-constant
non-kinematic critical points for which the fidelity assumes any desired
value.  Indeed, this was demonstrated for a four-level system in
Ref.~\refcite{wu2009}.  While some, and perhaps most, of these critical
points may be non-attractive, as appeared to be the case in the example
studied in Ref.~\refcite{wu2009}, there is no reason why \emph{all} such
critical points should be non-attractive in general.  Unfortunately, it
is difficult to prove this given only a numerical solution of
(\ref{eq:nonkin}) as we cannot calculate the Hessian exactly in this
case and prove it to be strictly negative definite.

\section{Unitary Operator Landscape}
\label{sec:unitary}

\subsection{No suboptimal attractive points (traps) over $\UU(N)$}

The set of critical points of the fidelity as a map from the unitary
group $\UU(N)$ to $\RR$ given by $\G_V(U)=\frac{1}{N}\Re\Tr[V^\dag U]$
is equal to all $U$ for which $\Re\Tr[V^\dag U A]$ vanishes for every
anti-Hermitian matrix $A\in\uu(N)$.  Since the bilinear map $\G_V$ is an
inner product over complex matrices for which the Hermitian and
anti-Hermitian matrices constitute orthogonal subspaces, the latter
condition is equivalent to $W=V^\dag U$ being Hermitian.  As $W$ is also
unitary it must be of the form $W=P_S-P_S^\perp$, where $P_S$ is a
projector onto a subspace $S$ of $\CC^N$, and $P_S^\perp$ the projector
onto the orthogonal complement of $S$.  From this we see immediately
that $\Tr(W)$ is equal to $N-2d$, where $d$ is the dimension of the
subspace $S$, and thus there are $N+1$ critical manifolds corresponding
to critical values $1-\frac{2d}{N}$ of $\G_V(U)$ for
$d\in\{0,\ldots,N\}$.  Moreover, the second derivative of $\G_V$ is
$\frac{1}{N}\Re\left[ V^{\dag} U A^2 \right]$, where $A^2$ can be any
negative semi-definite matrix.  Thus $U=\pm V$, corresponding to $W =
\pm I$ are global extrema, and all other possible $W$ have both positive
and negative eigenvalues, so that $\Re (W A^2)$ can take any value,
showing that all other critical points $U$ are saddle points.

This characterization of the critical points on $\UU(N)$, which can be
traced back to at least Ref.~\refcite{frankel1965}, provides the
motivation for the assertion in Ref.~\refcite{rabitz2005} that the
landscape for unitary operator optimization has critical points only at
these values of the fidelity, and that all critical points $U$, except
$U=\pm V$, which correspond to the global maximum and minimum of
$\G_V(U)$, respectively, are saddle points.  As in the pure-state
transfer case, to conclude this for the optimization problem over
$L^2(0,T)$ from the observations about the critical points of $\G_V$ as
a functional on $\UU(N)$, it is implicitly assumed that the solution
functional $U_f(T)$ is regular, in the differential geometric sense of
having a Jacobian of full rank everywhere over
$L^2(0,T)$. Ref.~\refcite{hsieh2008a} includes the condition that
variations of the controls can be used to generate any local variation
of $U(T)$ as part of the definition of controllability but this notion
is stronger than the usual notion of controllability.  As before, the
full-rank-everywhere assumption is problematic.  In fact for unitary
operator control problems it can \emph{never} be satisfied over any
function space that contains constant functions and thus any landscape
analysis based on this hypothesis is at best inconclusive.

For optimal control problems involving unitary operators this is even
easier to see than for pure-state optimal control problems: For any
constant control $f\equiv\mu$, the trajectory $U_{\mu}(t)$ will be a
linear combination of expressions $e^{-i \lambda_j t}$, where
$\lambda_j$ are the eigenvalues of $H_0+\mu H_1$.  The gradient of the
solution operator at this point $-iU_{\mu}(T,\tau)H_1 U_{\mu}(\tau)$ is
therefore a linear combination of functions
$e^{i(\lambda_j-\lambda_k)\tau}$ for $j,k\in\{1,\ldots,N\}$, or
equivalently of the real functions $1$,
$\cos((\lambda_j-\lambda_k)\tau)$ and $\sin((\lambda_j-\lambda_k) \tau)$
for $1\leq j<k\leq N$, of which there are only $N^2-N+1<\dim\uu(N)=N^2$.
More generally, the rank of the Jacobian when all controls are constants
$\mu_m$ is at most $N^2-N+M$ because, in the eigenbasis of $H_0+\sum
\mu_m H_m$, the diagonal elements of each $U_{\vec{\mu}}(\tau)^{\dag}H_m
U_{\vec{\mu}}(\tau)$ are constant functions.  The observation that
constant controls are singular in the sense of not being full-rank has
also been made recently in Ref.~\refcite{wu2009} but without considering
the implications for the applicability landscape results.

Also, the application of results in Ref.~\refcite{chitour2008} as
suggested in Ref.~\refcite{wu2009} is problematic as the former work,
originating in sub-Riemannian geometry requires a strictly
positive-definite running costs on the controls, which serves as a
regularizing term and is \emph{not} present in the optimal control
problems defined above, and is generally undesirable as it prevents us
from ever reaching the global maximum of the actual objective function.
Also properties generic over the infinite dimensional space of vector
fields, such as the main results of Refs~\refcite{chitour2008} or
\refcite{bonnard1997}, need not hold for any instance within the finite
dimensional subset of right-invariant vector fields which we restrict
attention to.  Thus, none of these arguments can guarantee the absence
of traps for the fidelity $\F_V$ over $L^2(0,T)$, and we shall show in
the following that traps do indeed exist.  Before we study the fidelity
over $L^2(0,T)$ further, however, it is useful to briefly consider the
critical points of $\F_V(U)$ on the Lie group in more detail, as there
are some subtle issues one should be aware of.  In particular when the
system evolution is restricted to a subgroup of $\UU(N)$ such as
$\SU(N)$, attractive critical points may emerge even on the group and it
may desirable to eliminate these by modifying the performance index.

\subsection{Attractive suboptimal critical points over $\SU(N)$}

Many problems in quantum control involve control Hamiltonians $H_1$ that
have zero trace, i.e. $iH_1 \in \su (N)$, in which case exact-time $\UU
(N)$ controllability cannot hold. Indeed, we always have $\det [U_f (t)]
= e^{- iT / N \Tr [H_0]} = e^{i \phi}$, regardless of the control
$f(t)$, so the reachable set at time $T$ is restricted to matrices in
$\SU (N)$ times the fixed phase $e^{i \phi}$. Hence, even assuming that
the system is $\UU (N)$ controllable and $V, T$ are chosen such that
$\det(V)=e^{i \phi}$ to make $V$ reachable, when considering the
critical points of the fidelity on the group, we should really consider
the critical points of $\G_V = \frac{1}{N} \Re \Tr [W]$ as a functional
over $W = V^{\dag} U \in \SU (N)$. The critical point condition is now
that $\Re\Tr(W A) = 0$ for all $A$ in the Lie algebra of trace-zero
Hermitian matrices $\su (N)$, and thus that $W$ be equal to a Hermitian
matrix $R$ plus a multiple $i \alpha$ of the identity with $\alpha$
real.  Both this and the unitary conditions are properties of the
spectrum alone, so we have in general that, $W$ is a critical point of
$\G_V$ whenever its eigenvalues are $i e^{i \theta}$ and $i
e^{-i\theta}$ with multiplicities $d$ and $N - d$, for some $d \in
\left\{ 0,\ldots,\left\lfloor N / 2 \right\rfloor\right\}$.  Finally,
the unit determinant condition on $W$ forces $e^{i \theta}$ to equal
$\exp \bigl( i \frac{\pi N}{2 ( N - 2 d )} \bigr)$ times any $|N - 2
d|^{\text{th}}$ root of unity. Note that when $N > 2$, this set of
points is larger than the set of critical points over $\UU (N)$ which
also happen to lie in $\SU (N)$. The second derivative of $\G_V ( W )$
is again $\frac{1}{N} \Re \Tr \left[ W A^2 \right]$, but now with $A \in
\su( N )$, and only the Hermitian part $R$ of $W$ contributes in this
expression, since $A^2$ is negative semi-definite. It is clear that if
$R$ has eigenvalues of both signs, implying $N \geq 3$, then the
corresponding critical point $W$ is a second order saddle. Otherwise $d
= 0$, and the critical point is a local maximum or minimum if $R$ is a
positive or negative multiple of the identity, while in case $R = 0$, we
can find curves $\alpha i e^{A x}$ for which $\G_V$ does not vanish to
third order in $x$, so $W$ is a saddle point. This characterizes the
attractive critical points of $\G_V(U)$ as those $U$ of the form $e^{i
\phi} V$ for some $N^{\text{th}}$ root of unity $e^{i \phi}$
having positive real part, and continuity of the solution operator $U_f
(T)$ then implies:

\begin{theorem}
For any controllable system $H=H_0+f(t)H_1$ with $\Tr(H_1)=0$, and any
$F=\cos(2\pi k/N)$ with $k=1,\ldots,\lceil N/4\rceil-1$, there exists an
open subset $\mathcal{N}$ of $L^2(0,T)$ such that any monotonically
increasing local optimization algorithm started with $f\in W$ will never
exceed a fidelity of $F$.
\end{theorem}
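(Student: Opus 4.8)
The plan is to obtain the theorem as a consequence of the critical-point analysis of $\G_V$ on $\SU(N)$ just completed: we transport the trapping region around a suboptimal attractive critical point on the group back to $L^2(0,T)$ via continuity of the solution operator $f\mapsto U_f(T)$. Fix $k\in\{1,\dots,\lceil N/4\rceil-1\}$, so that $e^{2\pi ik/N}$ is an $N^{\text{th}}$ root of unity with positive real part, and set $U_*:=e^{2\pi ik/N}V$; this has the same determinant as the target $V$ and is therefore reachable at time $T$ whenever $V$ is (the implicit standing assumption, with $T$ at least the critical time). By the discussion above $U_*$ is a critical point of $\G_V$ on the reachable coset with value $\G_V(U_*)=\tfrac1N\Re\Tr[e^{2\pi ik/N}I]=\cos(2\pi k/N)=F$, and its Hessian $A\mapsto\tfrac1N\Re\Tr[e^{2\pi ik/N}A^2]=\tfrac{\cos(2\pi k/N)}{N}\Tr[A^2]$ is strictly negative definite on $\su(N)$ (since $\cos(2\pi k/N)>0$ and $\Tr[A^2]<0$ for $0\neq A\in\su(N)$), so $U_*$ is an isolated strict local maximum of $\G_V$.

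First I would build the trapping region on the group. Strictness gives an open ball $B\ni U_*$ in the coset with $\G_V(U)<F$ on $B\setminus\{U_*\}$; compactness of the boundary sphere $\partial B$ then yields a $\delta>0$ with $\G_V\le F-2\delta$ on $\partial B$. Setting $\mathcal{O}:=\{U\in B:\G_V(U)>F-\delta\}$ gives an open set containing $U_*$ whose closure lies inside $B$ and on whose complement within $B$ one has $\G_V\le F-\delta$. The essential property is that $\mathcal{O}$ traps monotone-increasing motion: any continuous path $s\mapsto U(s)$ in the coset with $\G_V(U(s))$ non-decreasing and $U(0)\in\mathcal{O}$ can never reach $\partial B$, as there $\G_V\le F-2\delta<\G_V(U(0))$ would contradict monotonicity; hence $U(s)$ remains in $B$, so $\G_V(U(s))\le F$ for all $s$.

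Next I would pull this back to controls. Put $\mathcal{N}:=\{f\in L^2(0,T):U_f(T)\in\mathcal{O}\}$ and $\mathcal{N}':=\{f\in L^2(0,T):U_f(T)\in B\}$; both are open by continuity of $f\mapsto U_f(T)$, and $\mathcal{N}$ is nonempty because controllability furnishes some $f_*$ with $U_{f_*}(T)=U_*\in\mathcal{O}$ (in passing, $f_*$ is then a critical point of $\F_V$, since $d\F_V|_{f_*}$ factors through the vanishing $d\G_V|_{U_*}$). On $\mathcal{N}'$ one has $\F_V(f)=\G_V(U_f(T))\le F$. A monotonically increasing local optimization algorithm started at some $f\in\mathcal{N}$ generates a path $s\mapsto f(s)$ in $L^2(0,T)$ along which $\F_V(f(s))$ is non-decreasing; were it to leave $\mathcal{N}'$ it would first pass through some $f(s_0)$ with $U_{f(s_0)}(T)\in\partial B$, where $\F_V\le F-2\delta$, contradicting $\F_V(f(s))\ge\F_V(f(0))>F-\delta$. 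Hence the path stays in $\mathcal{N}'$ and never attains a fidelity above $F$, which is the claim.

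I expect the main obstacle to be making this last step rigorous rather than any calculation. One must fix a precise notion of ``monotonically increasing local optimization algorithm'': continuous-time gradient ascent is cleanest, whereas a discrete scheme needs an explicit bound forcing consecutive iterates closer than $\mathrm{dist}(\overline{\mathcal{O}},\,L^2(0,T)\setminus\mathcal{N}')$ so it cannot jump the barrier. And one must check that the trapping argument really survives the transfer from the group to $L^2(0,T)$ using only \emph{continuity}, not regularity, of $f\mapsto U_f(T)$ — in particular that $\mathcal{N},\mathcal{N}'$ are open and that a monotone path escaping $\mathcal{N}'$ must cross the preimage of $\partial B$. Everything upstream — the critical value $F$, the negative-definite Hessian, the reachability of $U_*$ — follows immediately from the analysis already in hand.
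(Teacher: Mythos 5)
Your proposal is correct and follows essentially the same route as the paper, which derives the theorem directly from the characterization of the attractive critical points of $\G_V$ over $\SU(N)$ as $U=e^{i\varphi}V$ with $e^{i\varphi}$ an $N^{\text{th}}$ root of unity of positive real part, together with continuity of $f\mapsto U_f(T)$. The paper in fact offers no further detail beyond that one sentence, so your explicit construction of the sublevel-set barrier $\mathcal{O}\subset B$ and the pullback to $L^2(0,T)$ (including the caveat about discrete versus continuous monotone descent) is a faithful, and more careful, rendering of the intended argument.
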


\subsection{Elimination of traps by optimization over $\PU(N)$}

These results show that that for optimization of $\SU(N)$, there are
traps, i.e., attractive critical points of $\G_V(U)$ with $\G_V(U)<1$,
and this is relevant for optimization problems over $L^2(0,T)$ with
$\Tr[H_1]=0$.  As the attractive critical points differ only by a global
phase factor $e^{i\theta}$ from the target gate $V$, however, and we
usually do not care about the global phase of a gate, these traps are
not really a problem in practice, provided the convergence condition of
the algorithm is more sophisticated than checking if $1-\F_V(U_f(T))$ is
less than a certain tolerance.  However, if we do not care about the
global phase of the gate $V$, a better choice of the performance index
to be optimized would be
\begin{equation}
  \G_V':U \mapsto\frac{1}{N^2}|\Tr(V^\dag U)|^2, 
\end{equation}
which is equivalent to optimization over the projective unitary group
$\PU(N)$.  To see why it is a better choice, note that the critical
point condition for $\G_V'$ is $\Re\left[\bar{\gamma}\Tr(W A)\right]=0$
for every $A\in\su(N)$ with $W=V^\dag U$ as before, and $\gamma=\Tr[W]$.
This is satisfied only when $\bar{\gamma}W=\alpha iI+R$ for some real
$\alpha$ and Hermitian $R$ with all eigenvalues of the same magnitude.
The second derivative at these points is $2\Re\Tr[RA^2]+2|\Tr[W A]|^2$,
the first term can only fail to ever be positive for any $A\in\su(N)$ if
$R$ is a non-negative multiple of the identity, or $N=2$.  In the first
case, either $\gamma=0$ or $W=e^{i\theta}I$, which corresponds to
$\G_V'$ attaining its minimal or maximal values respectively.
Otherwise, $R$ has eigenvalues of opposite sign, so the second
derivative's first term must vanish for every $A\in\su(N)$, and if its
second term always vanishes then $U$ must have equal eigenvalues, thus
maximizes $\G_V'$.  This modification of the objective function is
preferable in practice as it eliminates root-of-unity traps.

\subsection{Non-global maxima for optimization over $L^2(0,T)$}

The previous section shows that suboptimal attractive critical points do
arise for optimization of $\SU(N)$, and that this case is relevant for
quantum control problems, but these can easily be eliminated by changing
the performance index.  We now turn our attention to the existence of
sub-optimal attractive critical points for the actual optimization
problem of interest, i.e., the problem of maximizing the fidelity $\F_V$
over $L^2(0,T)$.  In particular we are interested in whether there are
suboptimal attractive critical points, i.e., points $U_f(T)$ for which
the gradient of $\F_V$ vanishes, the Hessian is negative definite, and
$\F_V(U_f(T))<1$ for systems that are time-$T$ controllable.

\begin{example}{\bf Traps for unitary operation optimzation problem. }
Consider the control problem specified by:
\begin{align*}
  H_0 = 
  \begin{pmatrix} 1 + \eps & 0 & 0\\ 0 & 1 & 0\\ 0 & 0 & 2 \end{pmatrix},
  \quad 
  H_1 = 
  \begin{pmatrix} a & 1 & 0\\ 1 & b & 1\\ 0 & 1 & c \end{pmatrix},
  \quad 
  V^{\dag} = 
  \begin{pmatrix} 
   e^{i \phi} & 0 & 0\\ 0 & i e^{i \gamma} & 0\\ 0 & 0 & i e^{-i\gamma}
  \end{pmatrix} e^{i T H_0} 
\end{align*}
with $T=\frac{\pi}{\eps}$.  At $f \equiv 0$ the fidelity is $\F_V =
 \cos\phi$ and the gradient and Hessian are 
\begin{align*}
  \nabla \F_V   &= \frac{1}{3}\int_0^{T} g(\tau) \alpha(\tau) \, d\tau \\
  \nabla^2 \F_V &= \frac{1}{3}\underset{0<\sigma<\tau<T}{\int\!\!\int} B(\sigma,\tau)
 \left[ \alpha (\tau) \beta (\sigma)+ \beta(\tau) \alpha(\sigma) \right] \dd \sigma \dd \tau
\end{align*}
where $g(\tau)= a\sin(\phi) + (b+c)\cos(\gamma)$ and
\begin{align*}
  B(\sigma,\tau) 
 &= -a^2\cos\phi-(c^2-b^2)\sin\gamma -\cos(\phi+\eps(\sigma-\tau))+ \sin(\gamma-\eps(\sigma-\tau))\\
 &= -a^2\cos\phi-(c^2-b^2)\sin\gamma
 -(\cos\gamma-\sin\phi)\sin(\eps(\sigma-\tau))
 -(\cos\phi-\sin\gamma)\cos(\eps(\sigma-\tau)).
\end{align*}
The last line follows from basic trigonometric identities. Using the
definitions in \ref{appendix} we can rewrite the Hessian as
\begin{multline*}
 -\frac{1}{3} w\Pi[1]-\frac{1}{3} (\cos\gamma-\sin\phi) \sum_{k=1}^{\infty}
  \frac{4}{\pi(4k^2-1)}\big(\Pi[\sin(2k\eps\bullet)]+\Pi[\cos(2k\eps\bullet)]\big)\\
 - \frac{1}{3} (\cos\phi - \sin\gamma)
  \big(\Pi[\cos(\eps\bullet)]+\Pi[\sin(\eps\bullet)]\big)
\end{multline*}
where
$w=a^2\cos\phi+(c^2-b^2)\sin\gamma+\tfrac{4}{\pi}(\cos\gamma-\sin\phi)$.
A sufficient set of conditions for $\F_V(f)$ to have a local maximum at
$f=0$ is that $\nabla \F_V(f)=0$ and $\nabla^2 \F_V(f)$ strictly
negative definite for $f=0$.  This is equivalent to
\begin{subequations} 
\label{egconsts}
\begin{align}
  {0} &= a\sin\phi + (b+c)\cos\gamma, \\
  0 &<\tfrac{4}{\pi}(\cos\gamma - \sin\phi)+ a^2\cos\phi + (c^2 - b^2) \sin\gamma,\\ 
  0 &<\cos\gamma-\sin\phi,\\
  0 &\le \cos\phi - \sin\gamma.
\end{align}
\end{subequations}
There are many solutions to this set of equations/inequalities.  For 
instance, suppose $(b+c)/a>0$ and $\gamma$ in the 4th quadrant. Then
$\cos\gamma>0$ and $\sin\gamma<0$.  The first equality then requires
$\sin\phi = -(b+c)/a \cos\gamma<0$ but we can choose $\phi$ to be 
in the 4th quadrant so that $\cos\phi>0$.  Then we have
$\sin\gamma<0<\cos\phi$ satisfying the last inequality, and
$\sin\phi<0<\cos\gamma$ satisfying the third inequality. 
This guarantees that the first two terms in the second inequality are
positive.  The third term will be non-negative if $b^2\ge c^2$.
\end{example}

\subsection{Variable time non-global maxima}

In typical quantum control problems, the target time $T$ is fixed, but
$T$ can be allowed to vary.  If the target time $T$ of control problems
is not fixed then there is some ambiguity as to how proximity between
control fields is to be measured when these lie in different function
spaces $L^2(0,T)$.  We can resolve this difficulty by optimizing over
Hamiltonians of the form $H(t)=\ell H_0+f(t)H_1$ with $T$ fixed, since
propagating $H(t)$ up to time $T$ is equivalent to propagating
$H_0+\frac{1}{\ell}f(t/\ell)H_1$ up to time $\ell T$.  The expressions
for the gradient~(\ref{eq:grd}) and Hessian~(\ref{eq:hess}) derived
earlier from the perturbative expansion can be adapted to this variable
time optimization framework simply by replacing all instances of $\Delta
f H_1=H_{f+\Delta f}-H_f$ by $\Delta\ell H_0+\Delta f H_1$.  In
particular, at $f \equiv 0$ the gradient of the objective functional
simplifies to
\begin{equation} 
  \label{eq:grd2}
  (\ell_a \oplus \alpha) \mapsto
  -i\int_0^T U_f(T,\tau)H_1\Delta f(\tau)U_f(\tau)\dd\tau 
  -iU(T)TH_0\Delta\ell 
\end{equation}
while its Hessian is the bilinear map on $\RR \oplus L^2(0,T)$ given by
\begin{multline}
  \label{eq:hess2}
  (\ell_a \oplus \alpha, \ell_b \oplus \beta) \mapsto 
   -\!\!\!\underset{0<\sigma<\tau<T}{\int\int}\!\!\! 
    U_f(T,\tau) H_1 U_f(\tau,\sigma) H_1 U_f(\sigma) 
    \left[\alpha(\tau)\beta(\sigma)+\beta(\tau)\alpha(\sigma)\right] 
    \dd\sigma\dd\tau \\
   - U(T)\int_0^T 
    \left([U_f(\tau)^{\dag}H_1U_f(\tau),H_0]\tau +
          H_0U(\tau)^{\dag}H_1U_f(\tau) T \right) 
    [\alpha(\tau)\ell_b+\beta(\tau)\ell_a]\dd\tau \\
    - U_f(T)T^2 H_0^2 \ell_a \ell_b.
\end{multline}

\begin{example}
In this framework consider the unitary control problem specified by
\begin{align*}
  H_0 = \begin{pmatrix}
  1 + \varepsilon & 0 & 0 & 0\\
  0 & 1 & 0 & 0\\
  0 & 0 & 2 & 0\\
  0 & 0 & 0 & 3
\end{pmatrix}, \quad
 H_1 = \begin{pmatrix}
  0 & 1 & 0 & 0\\
  1 & b & 1 & 0\\
  0 & 1 & c & g\\
  0 & 0 & g & d
\end{pmatrix}, 
\end{align*}
\begin{align*}
V^{\dag} = \begin{pmatrix}
  r & 0 & 0 & s e^{i \left( \theta - \gamma \right)}\\
  0 & - i & 0 & 0\\
  0 & 0 & e^{i \phi} & 0\\
  - s e^{i \gamma} & 0 & 0 & r e^{i \theta}
\end{pmatrix}
e^{i T H_0} 
\end{align*}
with $T=\frac{\pi}{\eps}$, where a fortiori $s=\sqrt{1-r^2}$, and we
shall set $b=\frac{5}{2}(1-\sqrt{3})$, $c=\sqrt{3}-3$, $d=3$,
$g=\sqrt{\frac{3}{71}(19+12\sqrt{3})}$,
$r=\frac{2}{9}(3-\sqrt{3})\approx 0.28$, $\phi=\frac{\pi}{3}$,
$\theta=-\frac{\pi}{3}$ with $\gamma$ arbitrary.  With these parameters,
the gradient at $f \equiv 0$, $\ell=1$ always vanishes, and the Hessian
evaluates to
\begin{align*}
 -C T^2 \dd\ell^2 - 0.80\ \Pi[1] 
 & - 1.60\ \big( \Pi[\cos]+\Pi[\sin] \big) 
   - 0.28\ \big( \Pi[\cos(\eps\bullet)]+\Pi[\sin(\eps\bullet)] \big)\\
 & -\sum_{k=1}^{\infty} \frac{4}{\pi(4k^2-1)} 
    \big(\Pi[\sin(2k\eps\bullet)]+\Pi[\cos(2k\eps\bullet)] \big),
\end{align*}
which is always positive definite, since $C$, although dependent on
$\eps$, must be at least $3.5$, and the fidelity $\F_U$ here is 
$\frac{1}{48}(33-2\sqrt{3})\approx 62\%$. 
\end{example}

\section{Conclusion}

We have revisited the control landscape for several classes of canonical
quantum control problems, in particular pure-state transfer and unitary
gate optimization problems.

Although the pure-state transfer fidelity as a function over the unitary
group only has two types of critical points, corresponding to either the
global minimum $0$ or the global maximum $1$, detailed analysis shows
that the class of critical points for the actual optimization problem
over functions in $L^2(0,T)$ is larger than the set of kinematic
critical points.  In particular, for any bilinear control system
$H_f(t)=H_0+f(t)H_1$ and any fixed target time $T$, there exist pairs of
initial and target states, such that some $f\equiv \mbox{\rm const.}$ is
a critical point of the system, and we can achieve any value of the
fidelity between $0$ and $1$ for such critical points.  Moreover, while
these critical points are not expected to be attractive in most cases,
we have presented examples of systems with suboptimal critical points at
which the Hessian is negative definite with infinite rank, showing that
traps \emph{do} exist for such problems, and the fidelity at these traps
can take many values, unlike the fidelity for kinematic critical points,
which is limited to 0 or 1 for pure-state optimization problems.

For the problem of unitary operator optimization we demonstrated that
there are no traps when the fidelity is taken to be a functional over
the unitary group.  Although there are critical manifolds on which the
fidelity takes values between $-1$ (global minimum) and $+1$ (global
maximum), all of these critical points are indeed saddle points, but the
situation changes when the analysis is restricted to the special unitary
group $\SU(N)$.  This case may appear artificial but it is actually
highly relevant for quantum control, as many quantum control problems
involve control Hamiltonians that have zero-trace, and hence we have no
global phase control.  Although the system may be $\UU(N)$ controllable
when $\Tr[H_1]=0$, if the target time is fixed, so is the global phase.
There are more critical points in this case than for $\UU(N)$, and some
of these are attractive critical points at which the fidelity assumes
values $<1$, i.e., traps.  More careful analysis shows that these traps
correspond to solutions $U=e^{i\theta}V$, where $e^{i\theta}$ is a root
of unity, which immediately shows that these solutions have fidelity
$<1$, according to the standard definition of the fidelity.
Nonetheless, these solutions are perfectly adequate for most practical
purposes when the global phase of an operator is not important.
Moreover, this problem can be avoided entirely simply by modifying the
performance index to reflect the fact that we do not care about the
global phase.  Optimizating over the projective unitary group $\PU(N)$
there are indeed no traps but again the situation is more complicated
for actual optimization problem over controls in $L^2(0,T)$.  Not only
are there critical points $f$ for which the fidelity assumes critical
values other than those permitted over $\UU(N)$ or $\SU(N)$, but
examples again show that the Hessian at these critical points can be
infinite-rank negative definite, implying that they are locally
attractive on any finite-dimensional subspace and therefore traps.  The
results can even be extended to the case where the target time $T$ is
variable, again proving the existence of traps even in this case.

The specific examples of traps constructed here prove an important
theoretical point about the existence of non-kinematical critical points
and traps in the control landscape, but perhaps more importantly, the
results raise many questions about the control landscape.  Can explicit
examples be constructed of non-constant controls which can be proven to
be traps in the sense that the gradient vanishes and the Hessian is
negative definite with infinite rank?  How common are these traps?  Are
there problems for which no such traps exist?  When iterative methods
are employed to find optimal controls, what is the domain of attraction
of the traps in the landscape and how does it depend on the algorithm
used?  The landscape also depends on the domain, i.e., the space of
controls.  Here we assumed $f\in L^2(0,T)$ but other function spaces can
be considered.  In practice the controls are usually restricted to a
finite-dimensional subspace of $L^2(0,T)$ as we do not have infinite
time and frequency resolution.  The zero-control traps are interesting
in this context as any finite-dimensional subspace of $L^2(0,T)$ will
contain such controls and therefore traps, but in general the control
landscapes may look very different for different subspaces.

\section*{Acknowledgement}

We thank Thomas Schulte-Herbr\"uggen, John V. Leahy, Alex Pechen, David
Tannor and Barry Sanders for helpful discussions and suggestions, and
the Kavli Institute of Theoretical Physics at UCSB and Banff
International Research Station (Canada) for enabling these.  This work
was supported by funding from EPSRC Advanced Research Fellowship
EP/D07192X/1, CASE studentship CASE/CNA/07/47, Hitachi and NSF Grant
PHY05-51164.

\appendix
\section{Definition of Operators $S$ and $C$}
\label{appendix}

We define the operators $C$ and $S$ via their action on functions
$\alpha,\beta\in L^2(0,T)$.  The functions $\{\sigma_0, \sigma_k^c,
\sigma_k^s: k\in \NN\}$ with $\sigma_0=\sqrt{\tfrac{1}{T}}$,
$\sigma_k^c=\sqrt{\tfrac{2}{T}}\cos(2k\omega t)$ and
$\sigma_k^s=\sqrt{\tfrac{2}{T}}\sin(2k\omega t)$ form an orthonormal
basis for $L^2(0,T)$.  For the operator $S$ defined by
\begin{equation}
\label{eq:S}
\ip{\beta}{S\alpha}_{L^2}
  = \!\!\!\underset{0<\sigma<\tau<T}{\int\int}\!\!\! \sin(\omega(\sigma-\tau)) 
     [\alpha(\tau)\beta(\sigma)+\beta(\tau)\alpha(\sigma)]\dd\sigma\dd\tau 
\end{equation}
we observe that the Fourier basis for $L^2(0,T)$ is an eigenbasis of $S$
by verifying that
\begin{align*}
 \ip{\sigma^s_k}{S \sigma^s_\ell} = \frac{2T}{\pi(4k^2-1)}
 \delta_{k\ell}, \quad
 \ip{\sigma^c_k}{S \sigma^c_\ell} = \frac{2T}{\pi(4k^2-1)} \delta_{k\ell}, \quad
 \ip{\sigma^c_k}{S \sigma^s_\ell} = 0.
\end{align*}
With the notation defined earlier, where $\Pi[f(\bullet)]$ defines a
projection operator mapping any $\alpha\in L^2(0,T)$ onto
$f(\bullet)\int_0^T \alpha(s)f(s)\dd s$ in $L^2(0,T)$, we can thus write
\begin{equation}
\begin{split}
  S &= -\frac{2T}{\pi} \Pi[\sigma_0] 
       + \sum_{k=1}^\infty \frac{2T}{\pi (4k^2-1)}(\Pi[\sigma_k^c] + \Pi[\sigma_k^s]) \\
    &= -\frac{4}{\pi} \Pi[\sigma_0] 
       + \sum_{k=1}^\infty \frac{4}{\pi (4k^2-1)}(\Pi[\cos(2k\omega\bullet)] + \Pi[\sin(2k\omega\bullet)]) 
\end{split}
\end{equation}
where the unnormalized projection in the second line is obtained by
simply multiplying the coefficients by $\tfrac{2}{T}$.  Notice that all
eigenvalues are non-zero, and except for the eigenvalue corresponding to
$\sigma_0^c$, positive.  For the operator $C$ defined similarly
\begin{equation}
\label{eq:C}
  \ip{\beta}{C\alpha}_{L^2}
  = \!\!\!\underset{0<s<t<T}{\int\int}\!\!\! \cos(\omega(s-t)) 
     [\alpha(t)\beta(s)+\beta(t)\alpha(s)]\dd s\dd t.
\end{equation}
the Fourier basis of $L^2(0,T)$ is not an eigenbasis 
\begin{gather*}
 \ip{\sigma^c_k}{S \sigma^s_\ell} = 0, \quad
 \ip{\sigma_0}{C \sigma_0} = \frac{T}{2} \frac{8}{\pi^2}, \quad
 \ip{\sigma^c_k}{C \sigma_0} = \frac{T}{2} \frac{\sqrt{2}}{\pi} \frac{4}{\pi(4k^2-1)} \\ 
 \ip{\sigma^s_k}{C \sigma^s_\ell} = \frac{T}{2} \frac{4}{\pi(4k^2-1)}\frac{4}{\pi(4\ell^2-1)}, \quad
 \ip{\sigma^c_k}{C \sigma^c_\ell} = \frac{T}{2}
 \frac{8k}{\pi(4k^2-1)}\frac{8\ell}{\pi(4\ell^2-1)}, \nonumber \\
 \hspace{0.84\textwidth} k,\ell>0
\end{gather*}
but it can be verified that $C$ is diagonalized by choosing $\alpha$ and $\beta$
\begin{align*}
 \alpha(t) &= \sum_{k=1}^\infty \frac{8k}{\pi(4k^2-1)} \sigma^s_k(t) \to \cos(\omega t) \\
 \beta(t)  &= \frac{2}{\pi} -\sum_{k=1}^\infty \frac{4}{\pi(4k^2-1)} \sigma^s_k(t) \to \sin(\omega t) 
\end{align*}
and thus we can write
\begin{equation}
 C = \Pi[\sin(\omega \bullet)] + \Pi[\cos(\omega\bullet)].
\end{equation}
\bibliography{landscape,unenc}

\end{document}